%% file: main.tex
\documentclass{ieeeaccess}

\usepackage{cite}
\usepackage{amsmath,amssymb,amsfonts}
\usepackage{booktabs}
\usepackage{graphicx}
\usepackage{array}
\usepackage{multirow}
\usepackage{tabularx}
\usepackage{longtable}
\usepackage{makecell}
\usepackage{placeins}
\usepackage{microtype}
\usepackage{balance}
\usepackage{textcomp}
\usepackage{url}
\usepackage[hidelinks]{hyperref}
\graphicspath{{figures/}}
\input{data/final_numbers.tex}

\makeatletter
\providecommand{\xfigwd}{\columnwidth}
\makeatother

\newcommand{\JSPT}{J_{\mathrm{SPT}}}
\newcommand{\JH}{J_{\mathrm{H}}}
\newcommand{\JW}{J_{\mathrm{W}}}
\newtheorem{proposition}{Proposition}

\newcounter{algorithm}
\newenvironment{algorithm}[1][tbp]{\begin{figure}[#1]\refstepcounter{algorithm}}{\end{figure}}
\newcommand{\algorithmcaption}[1]{\par\smallskip\footnotesize\textbf{Algorithm \thealgorithm.} #1}

\begin{document}
\history{Date of publication xxxx 00, 0000, date of current version xxxx 00, 0000.}
\doi{10.1109/TQE.2020.DOI}

\title{Plateau-Constrained Selection: Exploiting Degeneracy for Lower-Depth Quantum Compilation}
\author{\uppercase{Owen Friedewald}\authorrefmark{1},
\uppercase{Ali Shiri Sichani}\authorrefmark{1}, and
\uppercase{Chi-Ren Shyu}\authorrefmark{1}}
\address[1]{Department of Electrical Engineering and Computer Science, University of Missouri, Columbia, MO 65211 USA (e-mail: omfvq4@missouri.edu; asp9f@missouri.edu; shyuc@missouri.edu). ORCID: 0009-0003-6847-1706 (Friedewald), 0000-0002-7239-1341 (Shiri Sichani), 0000-0001-9197-9522 (Shyu)}
\tfootnote{This article extends the authors' four-page paper from the Quantum Reinforcement Learning Workshop at the 2026 IEEE International Conference on Quantum Computing and Engineering (QCE), publicly available as arXiv:2607.15307. The journal article adds formal plateau analysis, a two-stage selection method, independent routing controls, and separate hardware studies; no figure or table from that paper is reproduced. This work received no specific financial support or grant.}
\markboth
{Friedewald \headeretal: Plateau-Constrained Selection for Lower-Depth Quantum Compilation}
{Friedewald \headeretal: Plateau-Constrained Selection for Lower-Depth Quantum Compilation}
\corresp{Corresponding author: Ali Shiri Sichani (e-mail: asp9f@missouri.edu).}

\begin{abstract}
Minimum-cost orders of commuting phase terms can produce substantially different routed circuits. On the same 36-term instances, three orders with identical support cost 74 yield mean routed depths of 228.6, 233.8, and 256.7. We exploit this degeneracy under fixed placement and maintained-parity lowering: Stage~1 attains the support optimum, and Stage~2 selects minimum routed depth among 24 equal-cost orders. For distinct pair supports, we characterize orders attaining the support lower bound through Hamiltonian paths of the support line graph and count optima exactly through 20 terms. On synthetic 16-qubit assignment-Ising instances, selection reduces depth by 12.83\% under a different SABRE routing seed, with lower depth in all 20 instances. The selected orders lie a median 1.57 pool standard deviations below the pool mean, consistent with ordinary best-of-24 selection; the useful feature is that candidate rankings persist across routing seeds. Depth reductions extend to 48 terms and a random-MaxCut generator, whereas evaluation with BasicSwap reverses the gain. A 40-instance IBM Heron study measures a 0.59\% error reduction on the executed stabilizer-probe panel, but the primary confidence interval across instances includes zero. Plateau selection therefore improves routed depth in the tested SABRE pipeline while preserving the logical support optimum.
\end{abstract}
\begin{keywords}
quantum circuit compilation, commuting phase terms, parity networks, circuit depth, qubit routing, combinatorial optimization.
\end{keywords}
\titlepgskip=-15pt
\maketitle

\section{Introduction}
\label{sec:introduction}

\PARstart{A}{ quantum} compiler can minimize one cost while leaving an important choice to its tie-breaking rule. For commuting phase terms, different execution orders can require the same number of logical parity changes yet produce circuits with different depths after routing. Choosing among those orders provides a further opportunity to reduce compilation cost. This paper asks whether those ties can be used deliberately. We first minimize the support-transition cost of a phase component, then search within its minimum-cost set for an order that routes to lower depth. We call this set a \emph{plateau}: membership means equal primary cost, without implying that every member is reachable from every other by our local moves. The second stage preserves the primary optimum exactly.

We hold logical-to-physical placement and the maintained-parity implementation fixed, and change only term order. An auxiliary qubit (the ancilla), initialized to $\lvert0\rangle$, holds the parity needed by each phase rotation, reuses shared contributions between terms, and returns to $\lvert0\rangle$ at the end. This isolates an ordering decision that a compiler can make after placement while preserving the component's logical action. Our experiments evaluate this decision on synthetic phase components. On the same 36-term instances, three ways of choosing a minimum-support-cost order gave mean routed depths of 228.6, 233.8, and 256.7. All three had support cost 74. Thus, attaining the primary optimum did not determine the routed result. Selecting by depth among 24 equal-cost candidates reduced the first of these means to 199.3 when evaluated with a different routing seed. Figure~\ref{fig:workflow} shows the method and the separation between selection and evaluation.

We develop this argument in three steps. First, we characterize the minimum-cost set for distinct pair supports through Hamiltonian paths of the support line graph, and count its members exactly on small instances. Second, we test whether choosing among these orders improves depth beyond arbitrary candidate choice or an equal routing budget spent on transpiler restarts. Third, we test the limits of that choice across problem size, a second generator, a different router, and quantum hardware. Selection improves depth across SABRE routing seeds on the assignment and MaxCut instances. Evaluation with BasicSwap reverses the direction, showing that the choice depends on the router. The hardware study then tests whether lower depth translates into lower execution error. Together, these experiments connect the ordering freedom to its practical benefit and identify where that benefit ends.

This article extends our previous study~\cite{friedewald2026workshop}, which used a per-instance stochastic ordering search. The additions are the plateau characterization, strict secondary selection, independent routing controls, and separate hardware evaluations. The companion Supplementary Material (SM) documents the earlier objectives and hardware studies separately from the new selector.

\begin{figure*}[t]
\centering
\includegraphics[width=\textwidth]{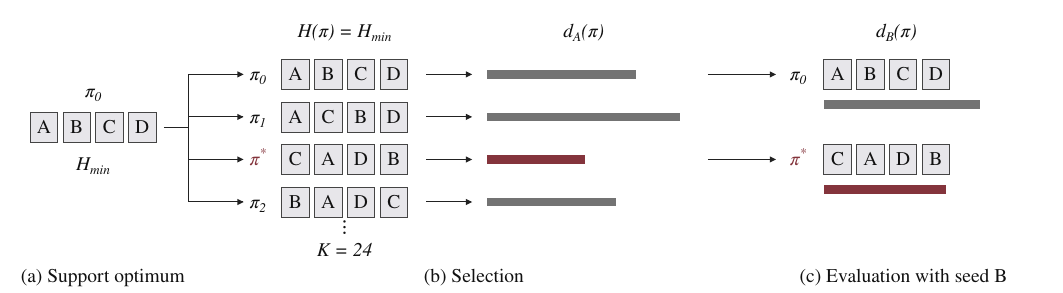}
\caption{Plateau-constrained selection. (a) The reference order $\pi_0$ attains minimum closed support cost $H_{\min}$ and initializes the search without routing results. (b) Among $K=24$ equal-cost orders (distinct up to reversal), $\pi^*$ minimizes routed depth $d_A$ under SABRE seed~A. Accent-colored labels and bars identify $\pi^*$. (c) Both orders are evaluated with seed~B on the same instance, without selecting again; seed roles are then reversed and results averaged. Phase terms, placement, and maintained-parity lowering remain fixed. Letters identify terms. Four candidates are shown schematically; bars share an illustrative scale and do not represent measured depths.}
\label{fig:workflow}
\end{figure*}

\section{Ordering Freedom and Its Primary Cost}
\label{sec:contract}

\subsection{Commuting terms and parity reuse}

Let $\mathcal P=\{(S_i,\theta_i,p_i)\}_{i=1}^n$ denote term occurrences: $S_i$ is the set of logical qubits acted on, $\theta_i$ is its phase coefficient, and $p_i$ records its source identity. Duplicate supports remain distinct occurrences. With $Z_S=\bigotimes_{q\in S}Z_q$ and identity elsewhere, commutation gives
\begin{equation}
\prod_{k=1}^{n}e^{-i\theta_{O_k}Z_{S_{O_k}}/2}
=\prod_{i=1}^{n}e^{-i\theta_iZ_{S_i}/2}
\label{eq:commutation}
\end{equation}
for every complete permutation $O$. The compiler preserves each occurrence, coefficient, and source identity, along with the placement and the deterministic mapped realization of each support. Every output must contain all occurrences exactly once, apply one phase per occurrence, and return the ancilla to $\lvert0\rangle$. We checked these conditions before routing every ordering circuit and rejected candidates that failed them.

Figure~\ref{fig:maintained} illustrates why order affects the logical cost. Moving from support $\{q_0,q_1\}$ to $\{q_0,q_2\}$ requires removing $q_1$ from the stored parity and adding $q_2$. The shared $q_0$ contribution remains. Separately computing and uncomputing each phase would lose that reuse. The logical correctness argument follows the parity stored in the ancilla. On a computational-basis data state $\lvert x\rangle$, the ancilla before term $i$ stores $a_i=\bigoplus_{q\in S_i}x_q$. An $R_z(\theta_i)$ rotation contributes $\exp[-i\theta_i(-1)^{a_i}/2]$, the phase of $e^{-i\theta_i Z_{S_i}/2}$. Toggling the qubits in $S_i\triangle S_j$ changes $a_i$ to $a_j$, so shared contributions remain available for the next term. Final uncomputation returns the ancilla to $\lvert0\rangle$ without changing the accumulated data phase. This establishes the logical construction on basis states and hence, by linearity, on arbitrary inputs. Routing must preserve that circuit's action; the ordering comparison holds its placement and lowering rules fixed.

\begin{figure*}[t]
\centering
\includegraphics[width=0.88\textwidth]{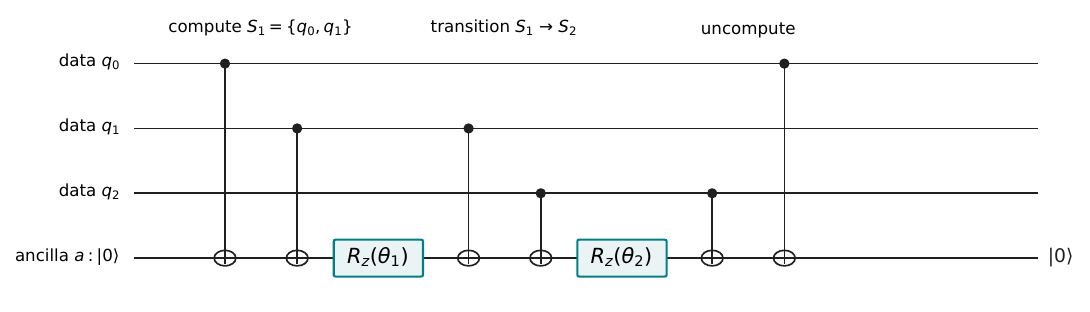}
\caption{Parity reuse between two commuting phase terms. The ancilla changes from storing the parity of $\{q_0,q_1\}$ to that of $\{q_0,q_2\}$ without an intermediate reset, then returns to $\lvert0\rangle$. This logical construction uses six CNOTs, compared with eight for independent compute--phase--uncompute sequences. Physical routing can add further operations.}
\label{fig:maintained}
\end{figure*}

\subsection{Minimum support cost and degeneracy}

Represent support $S_i$ by a binary mask $m_i$, and write $d_H(i,j)=\operatorname{popcount}(m_i\oplus m_j)$. The closed support cost is
\begin{equation}
\begin{gathered}
H_{\rm closed}(O)=\sum_{k=0}^{n}d_H(O_k,O_{k+1}),\\
O_0=O_{n+1}=0,\qquad m_0=0.
\end{gathered}
\label{eq:hclosed}
\end{equation}
The empty endpoints account for initial computation and final uncomputation. This is a logical transition cost; it does not include the physical router's SWAP and scheduling decisions. For distinct pair supports, consecutive terms either share one qubit or are disjoint. Their transition costs are therefore two or four, while the two empty endpoints contribute four in total. If $k$ consecutive pairs are disjoint,
\begin{equation}
H_{\rm closed}(O)=2n+2+2k,\qquad k\in\{0,\ldots,n-1\}.
\label{eq:strata}
\end{equation}
There are at most $n$ cost values for $n!$ orders. This coarse cost explains the possibility of ties, but does not establish how many orders attain the minimum.

\begin{proposition}[Attainment and multiplicity]
\label{prop:linegraph}
Let $G$ have logical qubits as vertices and distinct pair supports as edges. Let $L(G)$ be its line graph, whose vertices are the edges of $G$, with adjacency when two edges share an endpoint. Then $H_{\rm closed}=2n+2$ is attainable exactly when $L(G)$ has a Hamiltonian path. When attainable, minimum-cost orders are exactly the oriented Hamiltonian paths of $L(G)$; identifying an order with its reversal gives the number of undirected Hamiltonian paths.
\end{proposition}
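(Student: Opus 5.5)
The plan is to reduce everything to the strata formula \eqref{eq:strata} and then read off the combinatorics in $L(G)$.

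\textbf{Step 1: cost as a function of disjoint adjacencies.} Fix an order $O$ of the $n$ distinct pair supports. Each support has weight two, so $d_H(O_0,O_1)=d_H(O_n,O_{n+1})=2$. For consecutive terms $k$ and $k+1$ there are only two cases, since the supports are distinct two-element sets:
\begin{itemize}
\item they share exactly one qubit, so the symmetric difference has size $2$;
\item they are disjoint, so it has size $4$.
\end{itemize}
Summing gives \eqref{eq:strata}, $H_{\rm closed}(O)=2n+2+2k(O)$, where $k(O)$ counts the consecutive pairs that are disjoint. Because $k(O)\ge 0$, every order satisfies $H_{\rm closed}(O)\ge 2n+2$. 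Equality holds exactly when $k(O)=0$, that is, when every pair of consecutive supports shares an endpoint.

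\textbf{Step 2: translation to $L(G)$.} Two consecutive supports share an endpoint precisely when their vertices are adjacent in $L(G)$. An order $O$ is a sequence listing every edge of $G$ exactly once, which is every vertex of $L(G)$ exactly once. Therefore $k(O)=0$ holds if and only if $(O_1,\dots,O_n)$ is an oriented Hamiltonian path of $L(G)$. This gives a bijection between minimum-cost orders with $H_{\rm closed}=2n+2$ and oriented Hamiltonian paths. It also gives the attainability claim: the bound $2n+2$ is reached if and only if at least one such path exists.

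\textbf{Step 3: counting up to reversal.} First, reversal preserves cost, because $d_H$ is symmetric and the two empty endpoints swap roles. Second, reversal maps Hamiltonian paths to Hamiltonian paths. For $n\ge 2$, a path never equals its own reversal, since that would require $O_1=O_n$ for distinct vertices. The reversal classes therefore have size exactly two and correspond to undirected Hamiltonian paths. For $n=1$ there is one order and one trivial path, so the count is consistent with any reasonable convention.

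The only point that needs care is the distinctness hypothesis. Without it, a repeated support would contribute a cost-$0$ transition, which breaks the two-or-four dichotomy in Step 1. With distinct supports the argument is direct, and I expect no real obstacle.
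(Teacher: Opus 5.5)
Your proposal is correct and follows essentially the same route as the paper's proof: it uses the strata formula, identifies $k=0$ with $L(G)$-adjacency of every consecutive pair, and so reads off oriented Hamiltonian paths, with reversal accounting for orientation. Your explicit checks that reversal preserves $H_{\rm closed}$ and that reversal classes have size exactly two for $n\ge 2$ spell out details the paper leaves implicit.
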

\begin{IEEEproof}
Equation~\ref{eq:strata} attains its lower bound precisely when every consecutive pair of supports shares a qubit. That is precisely adjacency of their vertices in $L(G)$. Visiting each occurrence once is therefore a Hamiltonian path, and reversing the order reverses its orientation.
\end{IEEEproof}

An order with cost $2n+2$ therefore certifies the support optimum. We attained this bound in 11 of 18 exact-counting instances and in every tested 36- and 48-term instance. Finding such an order is a separate computational problem: edge-Hamiltonian-path decision is NP-complete on unrestricted graphs~\cite{bertossi1981edge}, and even a connected line graph can lack a Hamiltonian path.

\subsection{Multiplicity of minimum-cost orders}

We counted minimum-cost orders using Held--Karp dynamic programming augmented with exact path counts~\cite{heldkarp1962}. For a visited occurrence set $A$ ending at $j$, let $c(A,j)$ be the minimum cost from the empty depot and $q(A,j)$ its number of minimizers. Initialize $c(\{j\},j)=d_H(0,j)$ and $q(\{j\},j)=1$. For larger sets,
\begin{equation}
\begin{aligned}
c(A,j)&=\min_{i\in A\setminus\{j\}}
 [c(A\setminus\{j\},i)+d_H(i,j)],\\
q(A,j)&=\sum_{i\in I(A,j)}q(A\setminus\{j\},i),
\end{aligned}
\label{eq:counting}
\end{equation}
where $I(A,j)$ contains every minimizing predecessor. For the full occurrence set $U$, close each path with $d_H(j,0)$, take the minimum closed cost, and sum $q(U,j)$ over endpoints attaining it. Summing all minimizing predecessors counts every optimal path. The implementation uses $O(n^2 2^n)$ time and $O(n2^n)$ stored states; all reported counts fit exact unsigned 64-bit arithmetic.

Table~\ref{tab:plateaucounts} reports three instances chosen in advance at each size. Counts include both reversal orientations and vary substantially between instances. Although the absolute counts can be large, the fraction of all orders that is optimal decreases over this size range. The search at 36 and 48 terms complements these exact counts by testing whether enough distinct optima can be found for secondary selection.

\begin{table}[t]
\caption{Exact number $W$ of minimum-cost orders for three instances per size. When Proposition~\ref{prop:linegraph}'s lower bound is attained, half of $W$ is the number of undirected Hamiltonian paths of the support line graph. Counts include both reversal orientations. ``Middle'' denotes the middle of the three instance counts; variation between instances accounts for the non-monotone sequence.}
\label{tab:plateaucounts}
\centering\footnotesize
\begin{tabular}{rrrr}
\toprule
$n$ & min $W$ & middle $W$ & max $W$ \\
\midrule
10 & 24 & 288 & 288 \\
12 & 96 & 96 & 936 \\
14 & 312 & 432 & 656 \\
16 & 48 & 276 & 1,068 \\
18 & 448 & 3,440 & 15,066 \\
20 & 38,856 & 59,426 & 159,960 \\
\bottomrule
\end{tabular}
\end{table}

At both 36 and 48 terms, the search found 24 distinct orders, identifying reversals, in every one of the 60 instance/topology combinations. Since every search reached its candidate cap, these counts give lower bounds on the number of optima. The physical variation among optima is already visible at small sizes: exhaustive evaluation of the three 10-term minimum-cost sets gave routed-depth ranges of 58--83, 61--112, and 74--123 after averaging routing seeds. This is the variation that Stage~2 seeks to exploit.

\section{Selecting an Order Within the Plateau}
\label{sec:method}

\subsection{Candidate generation and depth selection}

Stage~1 minimizes $H_{\rm closed}$ using OR-Tools. On the 36- and 48-term instances, attaining $2n+2$ certifies optimality directly. Stage~2 explores only orders with that same cost. It proposes segment reversals or single-occurrence relocations, accepts only zero-cost changes, and retains distinct candidates after identifying reversals. The proposal cap is 100,000 and the candidate cap is $K=24$. The algorithm stops without a selection if it cannot fill the pool at the required cost. The selected order minimizes routed depth within this pool under a specified routing configuration and seed. This is a classical per-instance search, with no learned policy transferred between instances. It guarantees preservation of the primary cost. It guarantees neither a globally minimum routed depth nor a reduction in routed two-qubit count.

\begin{algorithm}[t]
\algorithmcaption{Strict selection among minimum-support-cost orders.}
\label{alg:plateau}
\centering
\begin{minipage}{0.98\columnwidth}\small
\hrule\medskip
\textbf{Input:} A certified minimum-cost order; fixed placement and lowering; routing seed; $K=24$.\\
\textbf{Output:} A valid order with unchanged $H_{\rm closed}$.
\begin{enumerate}\setlength{\itemsep}{2pt}\setlength{\parskip}{0pt}
\item Initialize a pool with the supplied order, identifying reversal-equivalent orders.
\item For at most 100,000 proposals, choose two distinct positions uniformly. With equal probability, reverse the segment or relocate the lower-position occurrence to the higher position.
\item Accept only proposals with zero change in $H_{\rm closed}$; retain new orders until the pool contains $K$ members.
\item Reject if fewer than $K$ candidates were found.
\item Route each candidate under the specified seed. Return the minimum-depth order, breaking ties by occurrence tuple.
\end{enumerate}
\medskip\hrule
\end{minipage}
\end{algorithm}

We compare selection with the \emph{reference order}: the support-optimal order used to initialize the candidate search. To identify an order with its reversal, we store the lexicographically smaller of their occurrence-identifier tuples. The reference occupies candidate index zero and remains fixed as the search adds orders. Its choice uses no routing results. For the 36- and 48-term assignment benchmarks, the reference is the lexicographically smallest minimum-cost order found in a preliminary bounded search spanning the optimum and the next two support-cost levels. It therefore depends on the search that produced those starting orders; it need not be the smallest order in the subsequent 24-candidate pool. For MaxCut and Pittsburgh, the reference is the support-optimal order returned by OR-Tools, with the same reversal rule.

The comparison measures the benefit of selecting among equal-cost alternatives to this starting order. The assignment benchmark reference includes the work of its preliminary search. Stage~2 then adds candidate generation and routing, whose costs we report separately. We also compare against the orders returned by two Stage-1 formulations. Identifying reversals removes duplicate logical orders from the search, although an order and its reversal can still route differently.

\subsection{Evaluation across routing seeds}

Selecting and evaluating on the same routed circuits would favor candidates partly because of routing luck. We therefore select with SABRE seed 1210003 and evaluate the chosen order with seed 1210019, then reverse the two roles. Each comparison uses the same instance, topology, and placement. This tests transfer across two routing seeds of the same router; transfer across instances or routers is a separate question. For instance $s$, topology $t$, and routing direction $r$, write the evaluated depth difference as $\delta_{str}=D_{str}^{\rm selected}-D_{str}^{\rm comparator}$. The term-seed observation is
\begin{equation}
\Delta_s=\frac{1}{6}\sum_{t=1}^{3}\sum_{r=1}^{2}\delta_{str}.
\label{eq:aggregation}
\end{equation}
In each direction, selection uses circuits compiled with one routing seed; the depth difference is measured on circuits compiled with the other seed. Reported mean differences average the 20 values $\Delta_s$; percentages use the difference of unrounded aggregate means divided by the comparator mean. The six routing conditions and 24 candidates contribute to each instance estimate; the sample size for inference is 20.

Before direct-depth selection, we used a selector that minimized $L_2$, the number of two-qubit operations on the heaviest dependency path. It assigns weight one to two-qubit gates and zero to other operations while retaining all dependencies, then takes the maximum weighted path length. In the first evaluation of that selector, both routing seeds contributed to selection and evaluation; we corrected this overlap by selecting with one seed and evaluating with the other, as above. We specified direct-depth selection and the restart control before their analyses, after the earlier results were known. The results below report both selectors and the effect of this correction.

\section{Routing Evaluation}
\label{sec:evaluation}

\subsection{Workloads and comparisons}

The primary workload contains 36 distinct quadratic Ising terms on 16 logical qubits, derived from a synthetic four-chiplet, four-site assignment QUBO. A binary variable $x_{cs}$ represents assigning chiplet $c$ to site $s$. Seeded connectivity and power determine pair costs, while row and column penalties enforce one chiplet per site. Substituting $x=(1-Z)/2$ gives a pool of quadratic $Z_iZ_j$ terms. For each term seed, we sample 36 distinct terms without replacement with probability proportional to coefficient magnitude, then sort by source-pool index. The 48-term arm follows the same rule. The maximum distinct-pair pool is ${16\choose2}=120$.

We use 20 term seeds and three fixed coupling graphs. Heavy hex is the distance-5 graph with 57 vertices, 64 edges, and diameter 16. Each modular graph joins two distance-3 modules with 38 total vertices; one bridge gives 41 edges and diameter 17, while three bridges give 43 edges and diameter 11. Bridge endpoints are the lowest-degree local vertices with identifiers breaking ties. We average over these three fixed graphs. Placement is deterministic and shared by all methods for each instance and topology. Logical qubits are ranked by descending occurrence frequency, then descending total absolute phase coefficient, then identifier. Physical vertices are ranked by descending degree, then descending closeness centrality, then identifier. Rank pairing assigns the 16 logical wires, and the highest-ranked unused vertex is assigned to the ancilla. Each pair support has a fixed deterministic shortest-path realization. Changing term order therefore changes neither the input placement nor the rule for realizing a support.

OR-Tools 9.15.6755 receives a five-second budget, using a cheapest-arc initial tour followed by guided local search; returned orders are rescored with the support cost in Eq.~\ref{eq:hclosed}. Qiskit 2.4.1 uses optimization level 3, the basis \texttt{rz,sx,x,cx}, fixed initial layout, and SABRE routing. The routing seeds are those defined in Section~\ref{sec:method}. Synthetic routed two-qubit counts are CX gates, whereas the hardware study uses native two-qubit operations. Exact graph lists, seed lists, and remaining software defaults accompany the reproducibility artifact.

The 48-term replication retains the same 20 term seeds, three topologies, two routing seeds, and $K=24$, at certified support cost 98. The second generator draws 36 distinct unweighted edges from the 120 possible edges of a 16-vertex random MaxCut graph, rejecting disconnected graphs through the deterministic random stream. Its coefficients are one and it has 20 graph seeds. It retains the placement rule, compiler settings, strict selection, and term-seed aggregation. A separate cross-router control evaluates the existing SABRE-selected orders with BasicSwap, holding placement, basis, optimization level, and the candidate pool fixed. These experiments test size, workload generator, and router separately.

The statistical unit is the term seed under Eq.~\ref{eq:aggregation}. We report paired 95\% percentile-bootstrap intervals from 10,000 resamples and exact two-sided Wilcoxon signed-rank tests~\cite{wilcoxon1945,efron1979}. Zero differences are dropped, tied absolute differences receive average ranks, and the complete sign distribution is enumerated. The minimum two-sided $p$-value for 20 nonzero differences is $2/2^{20}=1.91\times10^{-6}$, so unanimous comparisons share a finite testing floor. Intervals are pointwise and percentages use unrounded means; multiplicity-adjusted decisions use the reported $p$-values. Depth, routed CX, and two-qubit depth form a separate three-test Holm family for each of the corrected $L_2$, direct-depth, restart, 48-term, MaxCut, and BasicSwap protocols~\cite{holm1979}. The adjacent-stratum experiment uses one six-test family (two strata times three endpoints); the topology sensitivity uses six tests across Stage-1 CX and Stage-2 depth on three graphs. The Pittsburgh hardware study has one raw-error primary test and a two-test secondary family. We also report a global Holm sensitivity analysis over 64 comparisons in Section~\ref{sec:multiplicity}.

The earlier surrogate $\JSPT=H_{\rm closed}+0.05T$, where $T$ counts mapped-tree edge changes, serves as a Stage-1 comparator; it gave no significant routed-CX advantage over support-only ordering, and SM Sections~\ref*{sm-s:objectives}--\ref*{sm-s:stage1} give its definition and earlier comparisons, including the Default comparison added after results were available.

\subsection{Equal primary cost, lower routed depth}
\label{sec:stage2selection}

Direct-depth selection reduced mean opposite-seed depth from 228.6 to 199.3, a 12.83\% reduction, with lower depth in all 20 instances. The mean paired difference was $-29.3$ layers, with interval $[-34.0,-24.7]$ and Holm-adjusted $p=5.72\times10^{-6}$. Two-qubit depth also fell; the routed-CX confidence interval included zero (Table~\ref{tab:plateaudepth}). The $L_2$ selector, evaluated with the separate routing seed, reduced mean opposite-seed depth from 228.6 to 203.1 (11.14\%), also with lower depth in all 20 instances. Using the same two routed versions for both selection and evaluation had given a 12.51\% reduction; separating their roles reduced the estimate by 1.37 percentage points. Direct depth is lower still: the direct-minus-$L_2$ comparison gives $-3.9$ layers, interval $[-5.6,-2.1]$, and 15/2/3 favorable/tied/adverse term seeds. Table~\ref{tab:plateaudepth} shows the tradeoff: $L_2$ yields fewer CX gates on average, while direct depth is the better depth selector in this pool.

\begin{table*}[t]
\caption{Opposite-routing-seed controls at 36 terms, transfer at 48 terms and on MaxCut, and cross-router evaluation. Negative differences favor selection. At 36 terms, $L_2$ and direct selection use the reference order (228.6 depth); the restart arm uses its opposite-fold mean (229.3). The BasicSwap rows evaluate SABRE-selected direct-depth candidates against the same reference order. ``Diff.'' is computed from unrounded aggregate means; displayed means are rounded. The six row groups belong to separate three-endpoint Holm families.}
\label{tab:plateaudepth}
\centering\footnotesize
\setlength{\tabcolsep}{7pt}
\begin{tabular}{lrrrrr}
\toprule
Selector / metric & Selected & Comp. & Diff. & 95\% CI & Holm $p$ \\
\midrule
$L_2$: depth & 203.1 & 228.6 & $-25.5$ & $[-30.0,-21.1]$ & $5.72\!\times\!10^{-6}$ \\
$L_2$: CX & 415.6 & 426.1 & $-10.4$ & $[-19.0,-1.1]$ & 0.0362 \\
$L_2$: 2Q depth & 174.2 & 195.4 & $-21.2$ & $[-25.3,-17.2]$ & $5.72\!\times\!10^{-6}$ \\
Direct: depth & 199.3 & 228.6 & $-29.3$ & $[-34.0,-24.7]$ & $5.72\!\times\!10^{-6}$ \\
Direct: CX & 420.2 & 426.1 & $-5.9$ & $[-14.9,4.1]$ & 0.2305 \\
Direct: 2Q depth & 176.3 & 195.4 & $-19.1$ & $[-23.1,-14.9]$ & $7.63\!\times\!10^{-6}$ \\
Restart: depth & 228.4 & 229.3 & $-0.9$ & $[-2.5,0.5]$ & 0.5035 \\
Restart: CX & 426.0 & 425.7 & $+0.3$ & $[-0.5,1.3]$ & 1.0000 \\
Restart: 2Q depth & 195.2 & 195.6 & $-0.4$ & $[-1.9,0.8]$ & 1.0000 \\
\midrule
48-term direct: depth & 258.9 & 287.1 & $-28.2$ & $[-34.5,-22.1]$ & $5.72\!\times\!10^{-6}$ \\
48-term direct: CX & 518.4 & 520.8 & $-2.4$ & $[-8.9,3.8]$ & 0.7012 \\
48-term direct: 2Q depth & 225.4 & 243.6 & $-18.2$ & $[-23.4,-13.1]$ & $1.14\!\times\!10^{-5}$ \\
\midrule
MaxCut direct: depth & 200.4 & 244.2 & $-43.8$ & $[-50.3,-37.1]$ & $5.72\!\times\!10^{-6}$ \\
MaxCut direct: CX & 417.6 & 448.5 & $-31.0$ & $[-39.9,-22.2]$ & $5.72\!\times\!10^{-6}$ \\
MaxCut direct: 2Q depth & 176.9 & 210.5 & $-33.6$ & $[-39.5,-27.4]$ & $5.72\!\times\!10^{-6}$ \\
\midrule
BasicSwap direct: depth & 278.8 & 273.9 & $+5.0$ & $[1.7,8.3]$ & 0.0105 \\
BasicSwap direct: CX & 611.6 & 596.6 & $+15.0$ & $[7.8,22.2]$ & 0.00445 \\
BasicSwap direct: 2Q depth & 272.6 & 267.0 & $+5.6$ & $[2.2,9.0]$ & 0.00938 \\
\bottomrule
\end{tabular}
\end{table*}

The choice of comparison order changes the reported magnitude. Table~\ref{tab:baseline-reconcile} names all three orders behind the motivating example. Gui-open OR-Tools minimizes the established open support-path cost, which differs by a constant four here; $\JSPT$ OR-Tools adds the mapped-tree tie refinement defined above. Both attain support cost 74. Comparing the same selected circuits against them gives 14.77\% and 22.39\% depth reductions. We use comparison with the reference order for the 12.83\% headline: the larger percentages change the comparator, not the selected circuit or the experiment. The two tables use separate bootstrap resamples, producing slightly different interval endpoints for the same mean effect.

\begin{table*}[t]
\caption{Stage-1 tie-break sensitivity on the common 36-term units. Every baseline has $H_{\rm closed}=74$. The last columns compare the same opposite-routing-seed direct-depth Stage-2 result (mean 199.3) with each baseline; negative favors Stage~2. Intervals use the baseline-sensitivity resamples; Table~\ref{tab:plateaudepth} gives the separate direct-selector analysis.}
\label{tab:baseline-reconcile}
\centering\footnotesize
\begin{tabular}{lrrrrrr}
\toprule
Baseline / role & $H_{\rm closed}$ & Mean CX & Mean depth & $\Delta$ depth & 95\% CI & Change \\
\midrule
Reference order & 74 & 426.1 & 228.6 & $-29.3$ & $[-34.2,-24.7]$ & $-12.83\%$ \\
Gui-open OR-Tools & 74 & 431.9 & 233.8 & $-34.5$ & $[-43.1,-26.4]$ & $-14.77\%$ \\
$\JSPT$ OR-Tools & 74 & 430.2 & 256.7 & $-57.5$ & $[-66.3,-49.5]$ & $-22.39\%$ \\
\bottomrule
\end{tabular}
\end{table*}

\begin{figure*}[t]
\centering
\includegraphics[width=\textwidth]{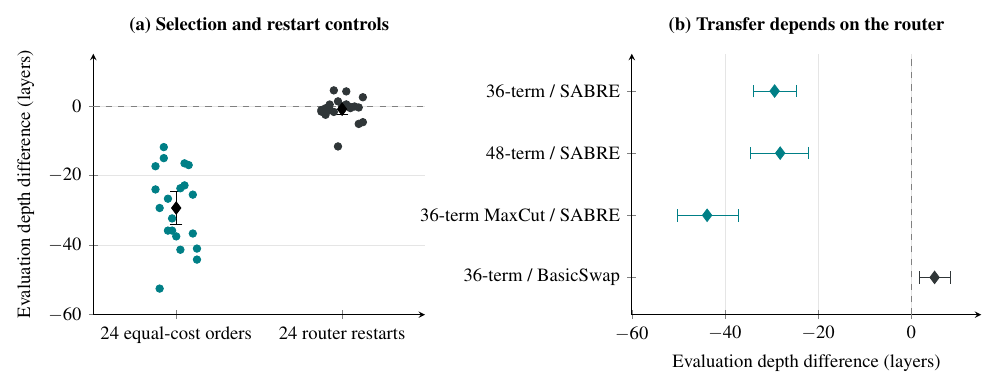}
\caption{Selection controls and transfer across compilation settings. (a) Each point is one term-seed depth difference at 36 terms. Direct selection compares with the reference order; the matched restart control uses its opposite-fold comparator. Diamonds and bars show means and paired 95\% bootstrap intervals. (b) Direct-depth selection transfers across size and generator under SABRE, but reverses under BasicSwap. Intervals are pointwise; the family-adjusted tests and all three compiler metrics appear in Table~\ref{tab:plateaudepth}. Negative differences favor selection in both panels.}
\label{fig:selection}
\end{figure*}

\subsection{Selection controls and routing variability}

The matched restart control spends the same 24-route budget recompiling the reference order with different transpiler seeds. Its 24 seeds are split into two ordered folds of 12. The minimum-depth position in one fold selects the same position in the other fold, where its depth is compared with that fold's mean; the fold roles are then reversed. This tests whether a restart position selected under one set of seeds remains favorable under a different set. Its reference depth is the fold mean, 229.3; the candidate experiment uses the two fixed SABRE seeds and has reference depth 228.6. Selecting a restart changed opposite-fold depth by $-0.9$ layers ($-0.41\%$), interval $[-2.5,0.5]$, Holm $p=0.5035$ (Fig.~\ref{fig:selection}a). Keeping the best compiled circuit from all 24 restarts instead gave mean depth 222.0 versus 229.3 ($-3.21\%$) in the in-sample analysis. Thus restarts help when the compiler keeps the winning circuit, while the cross-fold test gives no evidence that the winning position predicts lower depth under independent seeds.

Candidate diversity alone was also insufficient. A routing-independent control selected the minimum SHA-256 hash of each candidate's occurrence tuple after identifying reversals. It changed mean depth by $+1.44\%$, interval $[-3.7,10.7]$ layers, $p=0.641$. The corrected $L_2$ selector reduced depth by 13.06\% relative to the exact mean of all 24 candidates and by 12.40\% relative to the hash choice. These comparisons use the unchanged candidate pool and ask whether its routed ranking adds information beyond obtaining a different legal order. The score-permutation control breaks that information directly. Across 1,000 within-pool permutations of training $L_2$ scores among candidate identities, the mean null depth difference was $+4.6$ layers, with empirical 2.5--97.5\% range $[0.8,8.7]$. None matched the actual $-25.5$-layer difference (plus-one $p=0.0010$). The range describes the randomized outcomes. These hash, uniform, and permuted-score controls evaluate the corrected $L_2$ selector.

Candidate rankings are strongly correlated across routing seeds. Within each 24-candidate pool, depths under the two SABRE seeds had median Spearman correlation 0.910 (IQR 0.810--0.945). For pools compiled with the evaluation seed, the medians of the mean, minimum, and maximum depths were 228.7/191.0/277.5. The median sample SD was 22.5 layers (IQR 19.0--25.2), and the median range was 85.5 (IQR 75.0--94.0). Under the routing seed excluded from selection, the selected order's depth lay a median 1.57 pool SD below the pool mean (IQR 1.36--1.82), within the roughly $2\sigma$ scale of a naive normal best-of-24 comparison. The gain therefore has the scale expected from choosing the best of a small pool. The normal comparison is a reference scale, not a fitted distribution. What makes selection useful here is that a low-depth order under one routing seed tends to remain low-depth under the other, while every candidate preserves the primary optimum.

\subsection{Transfer across sizes, generators, and routers}

At 48 terms, direct selection reduced opposite-seed depth by 9.81\%; on the random-MaxCut generator it reduced depth by 17.95\% (Fig.~\ref{fig:selection}b). Both reduced depth in all 20 instances and remained significant after family and global correction. At 48 terms, the routed-CX interval included zero. Selection therefore extends to a larger phase component and a different pair-support generator. Depth also decreases on each of the three coupling graphs at 36 terms. Direct selection changed depth by $-21.7$ layers on heavy hex ($-10.27\%$, interval $[-32.1,-11.8]$), $-39.7$ on the one-bridge graph ($-15.17\%$, $[-48.8,-30.9]$), and $-26.6$ on the three-bridge graph ($-12.52\%$, $[-39.8,-13.6]$). Their six-test-family $p$-values were 0.00276, $1.14\times10^{-5}$, and 0.000839, respectively; all three also survive the global sensitivity. The overall decrease is therefore shared across all three topologies.

BasicSwap gives the complementary result. Orders selected using SABRE depth increased BasicSwap depth by 1.82\%, routed CX by 2.51\%, and two-qubit depth by 2.08\%. All three increases were significant within their three-test family; their global-adjusted $p$-values exceed 0.05 (Section~\ref{sec:multiplicity}). The median SABRE--BasicSwap depth-rank correlation was only 0.093. A useful order for one routing heuristic can therefore be an unfavorable choice for another; the method should be assessed with its intended router.

\subsection{Comparison with higher-cost strata}

The adjacent-stratum control tests whether a small primary-cost penalty buys better routed depth. It generates separate 24-candidate pools at exactly $H_{\min}+2$ and exactly $H_{\min}+4$, retaining the same proposal budget, routing settings, direct-depth selector, and opposite-seed evaluation. It never pools those strata. Their selected depths were 206.9 and 209.3, compared with 199.3 for strict selection (Table~\ref{tab:adjacentstrata}). The six-test-adjusted depth $p$-values were 0.0592 and 0.00946. Global correction leaves both CX penalties significant, but neither depth penalty (Section~\ref{sec:multiplicity}). Selection remains useful inside the relaxed strata. Relative to each pool's own routing-independent first candidate, depth fell from 241.3 to 206.9 at $H_{\min}+2$ (14.28\%) and from 246.9 to 209.3 at $H_{\min}+4$ (15.24\%), with lower depth in all 20 instances in each descriptive comparison. Selection thus helps within both relaxed strata, yet strict selection achieves the lowest mean depth and preserves the support optimum. Each percentage uses its own pool reference; comparing a relaxed selected mean with the strict-pool reference mean would mix the effect of selection with the change in pool.

\begin{table*}[t]
\caption{Effect of relaxing the support optimum at fixed $K=24$. Differences are adjacent-stratum selection minus strict $H_{\min}$ selection; positive is worse. Displayed means are rounded, while differences and percentages use unrounded values. Family size is six.}
\label{tab:adjacentstrata}
\centering\footnotesize
\setlength{\tabcolsep}{7pt}
\begin{tabular}{lrrrrr}
\toprule
Stratum / metric & Adjacent & Strict & Diff. & 95\% CI & Holm $p$ \\
\midrule
$H_{\min}+2$: depth & 206.9 & 199.3 & $+7.6$ & $[1.6,14.6]$ & 0.0592 \\
$H_{\min}+2$: CX & 436.0 & 420.2 & $+15.9$ & $[8.6,23.1]$ & 0.00241 \\
$H_{\min}+2$: 2Q depth & 182.5 & 176.3 & $+6.1$ & $[0.7,12.1]$ & 0.0592 \\
$H_{\min}+4$: depth & 209.3 & 199.3 & $+10.0$ & $[4.1,16.6]$ & 0.00946 \\
$H_{\min}+4$: CX & 446.1 & 420.2 & $+26.0$ & $[18.0,34.5]$ & $3.43\!\times\!10^{-5}$ \\
$H_{\min}+4$: 2Q depth & 186.6 & 176.3 & $+10.3$ & $[5.1,15.5]$ & 0.00676 \\
\bottomrule
\end{tabular}
\end{table*}

\subsection{Compilation cost and candidate budget}

Candidate generation took a median 0.0265~s per instance/topology combination, while routing cost a median 0.0131~s per candidate route and approximately 0.646~s for 24 candidates under both seeds. Final selection took about $8\,\mu$s. These timings describe the experimental workload and computing environment. Stage~1 uses a five-second OR-Tools budget. The preliminary reference-order search for the assignment benchmarks is excluded from these additional Stage-2 timings. With direct support rescoring, candidate generation costs $O(B_{\rm walk}n)$ for proposal budget $B_{\rm walk}$; the main additional empirical expense is compiling the candidates. Figure~\ref{fig:stage2headline} makes the budget tradeoff visible. Nested hash-ordered subsets use $K=1,2,4,8,12,16,24$, to show how performance changes with candidate count. Direct-depth changes were $+1.44$, $-2.46$, $-6.81$, $-9.80$, $-10.80$, $-11.63$, and $-12.83\%$. The final 1.21-percentage-point improvement exceeds the specified one-point saturation threshold, so saturation is not established at 24. The $L_2$ curve improved only 0.45 percentage points from 16 to 24, but its behavior cannot establish saturation for the different direct-depth selector. Larger pools may therefore offer further depth reductions at additional compilation cost.

\subsection{Global multiple-testing correction}
\label{sec:multiplicity}

Global Holm correction over 64 comparisons leaves 28 significant at 0.05. Table~\ref{tbl:global_audit} shows the results relevant to plateau selection. Depth reductions at 36 terms, 48 terms, and on MaxCut each have global $p=1.22\times10^{-4}$, and their two-qubit-depth reductions also remain significant. The corresponding 36- and 48-term CX differences and the restart contrast have $p>0.05$. The main SABRE depth result therefore holds under this broader correction. The broader correction changes the interpretation of several secondary results. BasicSwap's adjusted $p$-values are 0.3051 for depth, 0.0534 for CX, and 0.1455 for two-qubit depth. The adjacent-stratum depth penalties also exceed 0.05 ($p=0.7985$ and $0.1009$), while both CX penalties remain significant ($p=0.0183$ and $2.69\times10^{-4}$). All three Pittsburgh hardware endpoints have global $p=1$. The statistical evidence for lower depth under SABRE is consequently stronger than the evidence for higher depth under BasicSwap.

This sensitivity analysis includes direct-depth selection, restarts, transfer, adjacent strata, topology, and Pittsburgh hardware, together with the Stage-1 ordering, ParitySynth resynthesis, Kingston, and Boston comparisons. Its 64-test set excludes the corrected $L_2$ family, the descriptive direct-versus-$L_2$ contrast, randomized controls, and the conditional shot-noise analysis. Those analyses retain their separately reported roles. SM Section~\ref*{sm-s:audit} lists all comparison identities and the earlier versions of the correction.

\begin{table*}[t]
\caption{Global multiple-testing sensitivity for the main results. Rows show raw $p$-values, adjustment within the study's family, and adjustment across the 64-comparison set. The corrected $L_2$ family, direct-versus-$L_2$ descriptive comparison, randomized controls, and fixed-panel shot analysis are outside it.}
\label{tbl:global_audit}
\centering\footnotesize
\setlength{\tabcolsep}{12pt}
\begin{tabular}{lrrr}
\toprule
Comparison / endpoint & Raw $p$ & Family $p$ & Global-64 $p$\\
\midrule
36-term direct depth & $1.91\!\times\!10^{-6}$ & $5.72\!\times\!10^{-6}$ & $1.22\!\times\!10^{-4}$ \\
36-term direct 2Q depth & $3.81\!\times\!10^{-6}$ & $7.63\!\times\!10^{-6}$ & $1.83\!\times\!10^{-4}$ \\
Restart depth & $0.1678$ & $0.5035$ & $1.0000$ \\
48-term direct depth & $1.91\!\times\!10^{-6}$ & $5.72\!\times\!10^{-6}$ & $1.22\!\times\!10^{-4}$ \\
MaxCut direct depth & $1.91\!\times\!10^{-6}$ & $5.72\!\times\!10^{-6}$ & $1.22\!\times\!10^{-4}$ \\
$H_{\min}+2$ depth penalty & $0.0296$ & $0.0592$ & $0.7985$ \\
$H_{\min}+2$ CX penalty & $4.83\!\times\!10^{-4}$ & $0.0024$ & $0.0183$ \\
$H_{\min}+4$ depth penalty & $0.0032$ & $0.0095$ & $0.1009$ \\
$H_{\min}+4$ CX penalty & $5.72\!\times\!10^{-6}$ & $3.43\!\times\!10^{-5}$ & $2.69\!\times\!10^{-4}$ \\
BasicSwap depth increase & $0.0105$ & $0.0105$ & $0.3051$ \\
BasicSwap CX increase & $0.0015$ & $0.0044$ & $0.0534$ \\
BasicSwap 2Q depth increase & $0.0047$ & $0.0094$ & $0.1455$ \\
Hardware raw generator error & $0.2722$ & $0.2722$ & $1.0000$ \\
Hardware mitigated error & $0.2948$ & $0.5443$ & $1.0000$ \\
Hardware signed expectation & $0.2722$ & $0.5443$ & $1.0000$ \\
\bottomrule
\end{tabular}
\end{table*}

\begin{figure}[t]
\centering
\includegraphics[width=\columnwidth]{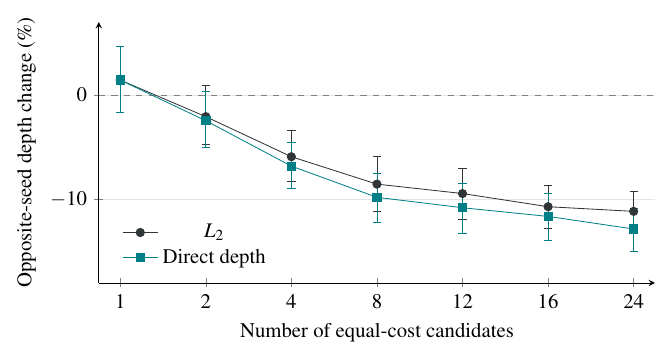}
\caption{Depth reduction versus candidate budget at 36 terms, relative to the reference order. Budgets use nested hash-ordered candidate subsets; their spacing is categorical. Points give aggregate percent changes across 20 term-seed units. Direct-depth bars are paired bootstrap intervals for percent change; $L_2$ bars divide the paired depth-difference interval by the fixed comparator mean. They are descriptive pointwise intervals, not simultaneous confidence bands. The direct-depth gain from 16 to 24 candidates exceeds the specified one-point saturation threshold.}
\label{fig:stage2headline}
\end{figure}

\section{Hardware Evaluation}
\label{sec:hardware}

\subsection{Pittsburgh study protocol}

We evaluated the reference order and the direct-depth-selected order on \texttt{ibm\_pittsburgh}, an IBM Heron processor, using 40 new 36-term seeds. We recorded the protocol before selecting the backend, compiling the circuits, or executing them. Selection and execution used opposite routing seeds in both directions, with two repeated blocks. To make the ideal output efficiently verifiable, the probe replaces phase angles with signed Clifford angles $\operatorname{sign}(c_i)\pi/2$. Supports, occurrence identities, order, placement, and routed two-qubit structure are retained. Starting from $\lvert+\rangle^{16}\lvert0\rangle$, we measure the 16 signed data-wire stabilizer generators and validate the ancilla's $Z$ stabilizer and return to $\lvert0\rangle$. These Clifford angles make the probe classically verifiable while preserving the ordering choice; they replace the application angles determined by the Ising coefficients. The primary endpoint is mean raw generator error $(1-s_gE_g)/2$, where $s_g$ is the ideal sign and $E_g$ the measured expectation. Zero means agreement with a stabilizer generator and one means the opposite sign. Their average measures error for the specified probe state. Results are averaged over the 16 generators, two routing directions, and two execution blocks before paired inference across 40 term seeds. Shot outcomes and compatible generators within a measurement setting are not independent term-seed observations.

The main experiment used 1,440 measurement circuits at 1,024 shots and 640 calibration circuits at 4,096 shots, totaling 2,080 circuits and 4,096,000 shots. A preliminary validation used 3,328 shots and identified missing input-state preparation. We corrected the preparation and repeated the check before the main experiment, keeping the instance seeds, endpoint, budget, and analysis plan unchanged. Readout-mitigated generator error and raw signed stabilizer expectation form the secondary two-test Holm family. Mitigation uses independent one-qubit assignment maps from zero/one calibration circuits; corrected expectations are clipped to $[-1,1]$ and a calibration is rejected when its affine slope has magnitude below 0.05. This procedure corrects readout error; gate errors remain part of the measured endpoint.

\subsection{Error reduction and variation across instances}

Stage~2 reduced mean raw generator error from \StageTwoQPURawStageOne{} to \StageTwoQPURawStageTwo{}, a change of \StageTwoQPURawDelta{} (\StageTwoQPURawPct{}). Error was lower in 24 of 40 instances. The primary 95\% confidence interval, \StageTwoQPUCI{}, included zero ($p=\StageTwoQPUP{}$), so the experiment does not establish a reduction across instances. The secondary mitigated-error and signed-expectation intervals also included zero (Table~\ref{tab:stage2qpu}). We also analyzed shot uncertainty conditional on the exact executed circuits. This analysis was added after the primary result and holds instances, routing directions, and blocks fixed while preserving covariance among generators measured in the same setting. Its standard error is \StageTwoQPUFixedSE{} and its interval \StageTwoQPUFixedCI{} excludes zero ($p=\StageTwoQPUFixedP{}$). The observed panel reduction is therefore larger than shot noise alone would explain. Table~\ref{tab:stage2qpu} places this result beside the primary interval: shot precision on these circuits is high enough to detect the small change, while variation across instances limits the population inference.

\begin{table*}[t]
\caption{Stage-2 execution on \texttt{ibm\_pittsburgh}. Differences are the selected order minus the reference order. W/T/L counts the direction favorable to Stage~2. The fixed-panel row propagates shot noise conditional on the exact executed circuits; all other rows use 40 term seeds as the scientific unit. Secondary $p$-values are Holm adjusted.}
\label{tab:stage2qpu}
\centering\footnotesize
\setlength{\tabcolsep}{3.2pt}
\begin{tabular}{llrrrrrr}
\toprule
Endpoint & Inference & Stage~2 & Stage~1 & Difference & W/T/L & 95\% CI & $p$ \\
\midrule
Raw generator error & term-seed primary & \StageTwoQPURawStageTwo{} & \StageTwoQPURawStageOne{} & \StageTwoQPURawDelta{} & \StageTwoQPUWTL{} & \StageTwoQPUCI{} & \StageTwoQPUP{} \\
Raw generator error & fixed-panel conditional & \StageTwoQPURawStageTwo{} & \StageTwoQPURawStageOne{} & \StageTwoQPURawDelta{} & --- & \StageTwoQPUFixedCI{} & $\StageTwoQPUFixedP$ \\
Mitigated generator error & term-seed secondary & \StageTwoQPUMitStageTwo{} & \StageTwoQPUMitStageOne{} & \StageTwoQPUMitDelta{} & \StageTwoQPUMitWTL{} & \StageTwoQPUMitCI{} & \StageTwoQPUMitHolmP{} \\
Signed stabilizer expectation & term-seed secondary & \StageTwoQPUSignedStageTwo{} & \StageTwoQPUSignedStageOne{} & \StageTwoQPUSignedDelta{} & \StageTwoQPUSignedWTL{} & \StageTwoQPUSignedCI{} & \StageTwoQPUSignedHolmP{} \\
\bottomrule
\end{tabular}
\end{table*}

The physical circuits retained a depth reduction, averaging 34.3 layers across the 80 seed/direction cells, with 56 favorable, 15 tied, and nine adverse cells. Two-qubit depth changed by $-12.0$ (52/17/11), while native two-qubit count changed by only $-0.5$ (29/15/36). These descriptive counts share 40 instances. The hardware result illustrates the difference between a compiler endpoint and execution accuracy: a clear mean depth gain can coexist with a small error change and substantial instance variation.

Earlier hardware experiments explain our choice of the stabilizer probe. The six-seed Fez pilot developed a different protocol. Kingston tested $\JH=\JW+0.05T$, which adds mapped-tree edge changes $T$ to the support-and-topology score $\JW$ from our previous study, and found a mitigated-error contrast between its OR-Tools and stochastic methods (family $p=0.0438$, global $p=0.6131$). Boston directly tested $\JSPT$, not strict Stage~2: its OR-Tools-minus-Default mean absolute observable-error difference was $-0.0059$, interval $[-0.0155,0.0028]$, family $p=0.4917$. Its ideal-observable panel was dominated by small expectations: 86.5\% had magnitude below 0.25 and a zero predictor had MAE 0.0941, compared with hardware means 0.1009--0.1077. The stabilizer probe instead supplies known signed-generator targets. The earlier protocols and full results are reported separately in SM Section~\ref*{sm-s:legacyhardware}.

\section{Relation to Existing Compiler Optimizations}
\label{sec:related}

Cancellation-oriented ordering already treats Pauli-term sequencing as a traveling salesperson problem~\cite{tomesh2021ordering,gui2020extended}. In our all-$Z$ pair-support setting, the Gui transition reduces to support-mask Hamming distance, and its open-path cost differs from $H_{\rm closed}$ by the constant four. Neither that cost, the artificial-depot conversion, nor parity reuse through an ancilla is new. Our focus is the remaining minimum-cost set: its structural characterization and a strict secondary choice evaluated under independent routing randomness.

Parity-network and phase-polynomial synthesis can change the underlying realization more broadly~\cite{amy2018cnotphase,vandaele2022phasepolynomials,meijer2023architecture,cao2025graphic}. Paulihedral, Rustiq, MonteQ, and 2QAN combine ordering with block, synthesis, routing, or scheduling freedoms~\cite{li2022paulihedral,goubault2024rustiq,machiya2026monteq,lao2022twoqan}. Fixed placement and lowering isolate the choice studied here; they are not asserted to be the best full-compiler design. Our plateau experiments compare methods with the same placement and lowering rules. Our earlier ParitySynth comparison illustrates the effect of allowing broader synthesis freedom. It resynthesizes the same data-only phase polynomial using the full physical graph, including additional qubits for parity construction, whereas the restricted method preserves the 17-wire maintained-parity realization. In its separate 720-route comparison, restricted LKH-3 optimized $\JH$ and averaged \ParityLKH{} routed CX versus \ParitySynthCX{} for ParitySynth, but median compile time favored ParitySynth (3.14 versus 9.01~s) and the depth-difference interval crossed zero. After correcting an invalid circuit export, all routes passed phase-polynomial and final-linear-map checks; the artifact preserves both export versions. That comparison predates Stage~2 and uses different synthesis freedoms. Its sensitivity to the available routing qubits and topology is documented in SM Section~\ref*{sm-s:resynthesis}.

Neutral local search and compiler autotuning also explore equal-objective or semantically equivalent alternatives~\cite{prestwich2005symmetry,triantafyllis2003ose,ashouri2018autotuning}. The present method applies that general idea within a certified quantum support-cost optimum. SABRE supplies the downstream routing heuristic~\cite{li2019sabre}; candidate generation does not learn across instances. Finally, circuit depth need not rank scheduled runtime when gate durations differ~\cite{tremba2025depth}. Our synthetic depth result is therefore reported separately from native-device error and the Boston scheduled-duration measurements.

\section{Discussion and Conclusion}
\label{sec:discussion}

A minimum primary cost can leave a compiler with many physically different choices. For distinct pair supports, the line-graph characterization identifies those choices whenever the support lower bound is attained. The experiments show how to use them: generate several minimum-cost orders, compile them with the intended router, and select by the downstream quantity of interest. The strict constraint makes the tradeoff explicit---additional classical compilation work in exchange for lower routed depth, with no increase in logical support-transition cost. For a compiler already using this maintained-parity implementation, the evidence supports a concrete decision sequence. First certify the support optimum if the lower bound is attainable. Then consider secondary selection when distinct optima can be found and the measured extra routing cost is acceptable. Evaluate with the intended router and keep the validation routes separate from those used for selection. If the candidate search cannot fill the required strict pool, the present protocol stops; the experiments do not justify silently relaxing the optimum or assuming the same gain from fewer candidates.

The controls explain the gain. Primary-cost ties contain a broad distribution of routed depths, and candidate rankings are strongly correlated across the two SABRE seeds. Selecting from this distribution produces the expected scale of a best-of-24 improvement. The contribution is making that choice available while preserving a certified logical optimum, and measuring its benefit against explicit compilation costs and alternatives. The experiments also identify the remaining questions. Exact multiplicity counts stop at 20 terms; counts at 36 and 48 terms are lower bounds from the candidate search. Routing experiments cover 36 and 48 terms, two synthetic pair-support generators, three fixed topologies, and two SABRE seeds. Stage~2 samples 24 orders rather than optimizing depth over the full plateau. Joint placement, broader parity synthesis, and full application circuits are natural extensions.

The router comparison gives a practical rule: select with the compiler that will execute the circuit. Rankings persist across SABRE seeds but weaken sharply under BasicSwap. Hardware introduces a further requirement, since lower routed depth alone does not ensure lower execution error across instances. Within the tested SABRE pipeline, however, the result is consistent across sizes, generators, and topologies: choosing among equal-cost phase-term orders reduces depth without sacrificing the support optimum.

\section*{Data and Code Availability}

Code, reproducibility materials, recorded protocols, result tables, and public-safe execution metadata for this study are available at \url{https://github.com/owenfriedewald/compiler-freedom}; the repository release manifest identifies the evidence files and their integrity records. The versioned archival release is identified by DOI \href{https://doi.org/10.5281/zenodo.22063212}{10.5281/zenodo.22063212}. Software is released under Apache-2.0, and research data and artifacts under CC BY 4.0. Raw provider job references and service/account identifiers are excluded, while their hashes and the corresponding circuit, count, seed, backend, and calibration records are retained. The companion Supplementary Material gives the full protocols, secondary comparisons, and the separate Fez, Kingston, and Boston hardware studies.

\section*{Conflict of Interest}

The authors declare no conflicts of interest.

\section*{Acknowledgment}

The computation for this work was performed on the University of Missouri's Quantum Innovation Center, in partnership with IBM Quantum and facilitated by Research Support Services at the University of Missouri, Columbia, MO. The facility record is available at DOI: \href{https://doi.org/10.32469/10355/107781}{10.32469/10355/107781}.

OpenAI Codex assisted with grammatical revision and reorganization of the Abstract and Sections~I--VII, and was used more extensively in the drafting and revision of the Supplementary Material. It also assisted with figure generation, primarily the maintained-parity circuit illustration in Fig.~\ref{fig:maintained}. Codex was further used as a review tool to identify potential issues in mathematical arguments and to suggest refinements, including the proof in Section~\ref{sec:contract} and the weighted-refinement proof in SM Section~\ref*{sm-s:objectives}. The authors retain responsibility for the manuscript and its scientific claims.

\bibliographystyle{IEEEtran}
\bibliography{references}
\EOD
\end{document}

%% file: data/final_numbers.tex
\newcommand{\ParityLKH}{430.2}
\newcommand{\ParitySynthCX}{543.6}

\newcommand{\StageTwoQPURawStageTwo}{0.4130}
\newcommand{\StageTwoQPURawStageOne}{0.4155}
\newcommand{\StageTwoQPURawDelta}{-0.0025}
\newcommand{\StageTwoQPURawPct}{-0.59\%}
\newcommand{\StageTwoQPUWTL}{24/0/16}
\newcommand{\StageTwoQPUCI}{[-0.0142,\,0.0104]}
\newcommand{\StageTwoQPUP}{0.2722}
\newcommand{\StageTwoQPUMitStageTwo}{0.4095}
\newcommand{\StageTwoQPUMitStageOne}{0.4120}
\newcommand{\StageTwoQPUMitDelta}{-0.0025}
\newcommand{\StageTwoQPUMitCI}{[-0.0149,\,0.0110]}
\newcommand{\StageTwoQPUMitWTL}{25/0/15}
\newcommand{\StageTwoQPUMitHolmP}{0.5443}
\newcommand{\StageTwoQPUSignedStageTwo}{0.1739}
\newcommand{\StageTwoQPUSignedStageOne}{0.1690}
\newcommand{\StageTwoQPUSignedDelta}{+0.0049}
\newcommand{\StageTwoQPUSignedCI}{[-0.0208,\,0.0283]}
\newcommand{\StageTwoQPUSignedWTL}{24/0/16}
\newcommand{\StageTwoQPUSignedHolmP}{0.5443}
\newcommand{\StageTwoQPUFixedCI}{[-0.0034,\,-0.0016]}
\newcommand{\StageTwoQPUFixedP}{4.46\!\times\!10^{-8}}
\newcommand{\StageTwoQPUFixedSE}{0.000451}